\newtheorem{theorem}{Theorem}[section]
\newtheorem{proposition}[theorem]{Proposition}
\newtheorem{corollary}[theorem]{Corollary}
\newtheorem{definition}[theorem]{Definition}
\newtheorem{example}[theorem]{Example}
\newenvironment{proof}{\mbox{\bf Proof.}}{\mbox{$\dashv$}\bigskip}
\begin{document}
    
    \begin{center}

        {\Large\bf   Computer-Simulation Model Theory}  \\  {(    $  \mathrm{ P= NP}$ is not provable)    }\\~\\

        {\bf  Rasoul Ramezanian}\\
        {\bf  Mashhad, Iran}~\footnote{rramezanian@um.ac.ir, ramezanian@sharif.ir}\\

    \end{center}

\begin{abstract} 

\noindent The simulation hypothesis says that all the materials and events in the reality (including the universe, our body, our thinking, walking and etc) are computations, and the reality is a computer simulation program like a video game. All works we do (talking, reasoning, seeing and etc) are computations performed by the universe-computer which runs the simulation program.
 
  \noindent Inspired by the view of the simulation hypothesis  (but independent of this hypothesis), we propose a new method of logical reasoning named "Computer-Simulation Model Theory", CSMT. 
 
\noindent Computer-Simulation Model Theory is an extension of Mathematical Model Theory where instead of mathematical-structures, computer-simulations are replaced, and the activity of reasoning and computing of the reasoner is also simulated in the model.   (CSMT) argues that:

\begin{quote}
  \textit{for a formula $\phi$, construct  a computer simulation model $S$} such that
\begin{itemize}
        \item[1-]    \textit{$\phi$ does not hold in $S$, and}
    \item[2-] \textit{the reasoner $I$ $($human being, the one who lives inside the reality$)$ cannot distinguish $S$ from the reality $(R)$,  }

\end{itemize}  \textit{then}   
  
\textit{$I$ cannot prove $\phi$ in reality.}
\end{quote}
 \noindent Although  $\mathrm{CSMT}$  is inspired by the simulation hypothesis, but this reasoning method is independent of the acceptance of this hypothesis. 
 As we argue in this part, one may do not accept the simulation hypothesis, but knows $\mathrm{CSMT}$ a valid reasoning method.  
 
 As an application of Computer-Simulation Model Theory, we study the famous problem P vs NP. We let $\phi \equiv\mathrm{ [P= NP]} $ and construct  a computer simulation model $E$  such that     $\mathrm{ P= NP}$ does not hold in $E$.    
\end{abstract}

\section{Simulation Hypothesis}
\noindent 
Human's worldview changes with the times. At the time of Omar Khayyam, the industry was limited to the Pottery, and Omar Khayyam (inspired by the pottery) assumed the universe as a water vessel and asked "hey you! where is the potter?"

 Our worldview changes as our industry changes and our current industry are artificial intelligence, virtual reality, robotics, and video games. Using artificial intelligence, we are going to make new creatures even more intelligent than us. Sophia Robot~\footnote{$https://en.wikipedia.org/wiki/Sophia_(robot)$} is one of these creatures.

 So Nick Bostrom (inspired by  the computer industry) asks "Are you living in a computer simulation?" (see~\cite{kn:NBOS});   we might be all creatures in a video game developed by a programmer.
We almost are going to believe this new worldview as Elon Musk says that he feels "We're Probably Living in a Simulation".

 Briefly, the simulation hypothesis is the following sentence
 \begin{quote}  
     we are living in a simulation program (a video game) which we, inhabitants, call it Reality. This simulation program is run on a computer device (similar to your PC or your laptop)  which we call it universe-computer (UC).
\end{quote}

Regarding  reality as a computer simulation means that all we (human beings) do (thinking, walking, reasoning, talking, jumping, and etc) are computations and procedures performed by the "Universe-Computer" (UC). When I think, it is UC that computes, when I prove a theorem, it is UC that computes, when I breathe, it is again UC, and etc. Physics, Biology, Logic and etc are all computational process which UC performs.

So, we may name human beings (the one who lives in) to be the \underline{computist} as his/her thinking, breathing, walking and etc are all computations done by UC.

In the next section, we discuss our proposed reasoning method $\mathrm{CSMT}$. For justification of our reasoning method, we do not need the simulation hypothesis to be "True", we just need it to be "Possible".  So one may do not accept the simulation hypothesis, but accepts the $\mathrm{CSMT}$ as a valid reasoning method.

\section{Computer-Simulation Model Theory}

Computer-Simulation Model Theory is an extension of the Model Theory~\cite{kn:modelm}, where the reasoner also lives inside the model and is an inhabitant of the model.

 Model theory is a branch of mathematical logic that studies that mathematical-structures satisfy which formulas.  Given a set of formulas $\Gamma$, and a formula $\phi$, a way to reason that  $\phi$ is not logically derivable from $\Gamma$ is simply by constructing a mathematical-structure    $M$, such that all formulas in $\Gamma$ holds in $M$ ($M\models \Gamma$) but $\phi$ does not hold in the world $M$ ($M\not\models \phi$). This reasoning is true, because of the soundness of deduction system~\footnote{For example, suppose that we want to show that a formula $\phi$ is not True for all graphs. We only need to construct a graph structure which does not satisfy $\phi$.}.
 
 In Computer-Simulation Model Theory, instead of mathematical-structure, we deal with computer-simulations.   In model theory, the reasoner (the one who thinks and reasons) is out of the mathematical-structure, while   In CSMT, the reasoner (the one who thinks, computes, and ...) lives and simulated inside the computer-simulation. 
 
In a computer-simulation model, a formula $\psi$ could be provable for the creator of the computer-simulation but not provable for the computist. But we assume soundness, that is if the computist proves a formula $\phi$ then the formula is true in the computer-simulation.

The Computer-Simulation Model Theory  argues that:
 
 \begin{quote}
  to show that the reasoner (the computist) cannot prove a sentence $\phi$ in the reality ($R$),  construct a computer simulation $S$ such that
     \begin{itemize}
         \item[(1)] the reasoner  (computist) cannot distinguish $S$ from the reality $(R)$, and
         \item[(2)]    $\phi$ does not hold in $S$.
     \end{itemize} 
 \end{quote} 

If our deduction system is sound then all we prove must hold in reality. So, if $S$ is a computer simulation that we cannot say whether we live in $S$ or in the reality, then if we can prove  $\phi$ in the reality, the statement $\phi$ must also be true at $S$.

Note that our activity of proving and reasoning are computations performed by the UC of reality. This property differs CSMT from model theory.

  The activity of the reasoning and computing of the computist is also performed by the simulation, and this activity can affect the model.

\noindent We are in the age of artificial intelligence, and constructing a computer simulation exactly similar to the reality is completely plausible. Using artificial intelligence, we can simulate Physics, Biology, our thinking, our reasoning and etc.

 The mathematical Model theory considers the reasoner outside of the model. Model theory is not a complete and real reasoning method since in reality, we (the reasoner) are inhabitant and inside the model. In CSMT, an extension of the mathematical model theory,    the \textit{process of reasoning} is also considered in the model, and this process may affect other parts of the model.
  
  \subsection{Computer-Simulation Model}
   A computer-simulation model $S$ consists of the three following parts
  \begin{itemize}
      \item[i.]  The UC of $S$.
      \item[ii.] The Computist. 
      \item[iii.] A set of instruction  $INST$ that the computist (the inhabitants in $S$) interacts with UC through them. [when we walk, reason, breathe, and etc, we are asking UC   to perform the computation of walking, reasoning, breathing, and etc].
  \end{itemize}
  
  All we do (walking, reasoning, and etc) are procedures performed by UC of the simulation. So, to formally describe a computer-simulation model,  we just need to define 
  \begin{itemize}
      \item what a procedure is?
      \item how the computist using the procedures interacts with the UC of the computer-simulation model?
      
  \end{itemize} In the  next definitions~\ref{ucuc},~\ref{proced}, and~\ref{CE}, we clarify these notions.
  \begin{definition} \label{ucuc}
      The \emph{UC}  of a computer-simulation model $S$ is a tuple    $$U=(TBOX,SBOX, INST,
      CONF)$$ 
      where
      \begin{itemize}
          \item[\emph{1.}] $INST$ is a nonempty set \emph{(the set of all
          instructions)}, and $INST_0\subseteq
          INST$ is a nonempty subset called the set of  starting
          instructions.
          
          \item[\emph{2.}] $CONF$ is a nonempty set called the set of
          \emph{configurations} such that to each $x\in \{0,1\}^*$,
          \begin{itemize}
              \item  a unique configuration $C_{0,x}\in CONF$ is associated as the start
              configuration, and
              \item to each $C\in CONF$,  a unique string $y_C\in
              \{0,1\}^*$ is associated.
          \end{itemize}

      \end{itemize}
  
      \begin{itemize}
          \item[\emph{3.}]  \emph{The transition   box}, $TBOX$, is a total function
          from $CONF\times INST$ to $CONF\cup\{\bot\footnote{undefined symbol.}\}$. The function $TBOX$ is executable by a computer device~\footnote{We assume that the computer device is equipped and extended with memory cards as much as needed and we never face the shortage of memory, and thus Turing machines are also executable by computer devices.}. 
          
          \item[\emph{4.}] \emph{The successful   box}, $SBOX$,  is a total function
          from $CONF$ to $\{YES, NO\}$. The function $SBOX$ is executable by a computer device. 
      \end{itemize}  
  \end{definition}
  Note that when we say a computer-simulation model, then the simulation program of the model must be executable by a computer device and because of this, two functions $SBOX$ and $TBOX$ must be executable by a computer device.
  \begin{definition}\label{proced}
      ~\begin{itemize}
          
          \item[i.] A   \emph{procedure (an algorithm)}  is  a finite  set
          $M\subseteq INST$ \emph{(a finite set of instructions)}, satisfying the
          following condition 
          
          \begin{itemize} \item[] \emph{The determination condition}: for
          every $C\in CONF$ either for all $\iota \in M$,
          $TBOX(C,\iota)=\bot$, or at most there exists one instruction
          $\tau \in M$ such that $TBOX(C,\iota)\in CONF$. 
      \end{itemize}
          We refer to the
          set of all procedures by the symbol $\Xi$.
          
          \item[ii.]We let $\upsilon: \Xi\times CONF\rightarrow
          INST\cup\{\bot\}$ be a total function such that for each
          procedure $M$ and $C\in CONF$,
          if $\upsilon(M,C)\in INST$
          then 
          \begin{itemize}
              \item $\upsilon(M,C)\in M$, and
              \item $TBOX(C,\upsilon(M,C))\in CONF$.
          \end{itemize}
          The function $\upsilon$ controls that which instruction of a procedure $M$ must apply on a given configuration $C$. 
      \end{itemize} 
  \end{definition}
  
  \begin{definition} \label{CE} For every computer-simulation model $S$, we consider the followings to be true:
      
      \begin{itemize} 
          
          \item[$c1.$] \emph{Free will of Inhabitants:} The computist is free to do the
          following things in  any order that he wants:
          
          \begin{itemize} \item[$1-$] he can freely choose an arbitrary instruction $\iota \in INST$
              and an arbitrary configuration $C\in CONF$ to apply the $TBOX$ on
              $(C,\iota)$, and
              
              \item[$2-$] he  can freely choose an arbitrary  configuration
              $C\in CONF$ to apply the $SBOX$ on. 
          \end{itemize} 
          
          \item[$c2.$] \emph{Computable Languages:} A string $x\in \Sigma^*$, $\Sigma= \{0,1\}$,  is in the
          \emph{language} of a  procedure $M$, denoted by $L(M)$, whenever the computist can
          construct a sequence $C_{0}C_{1},...,C_{n}$ of configurations in
          $CONF$ such that
          \begin{itemize}\item $C_0=C_{0,x}$, \item each $C_i$, $i\geq 1$,
              is obtained by applying $TBOX$ on $(C_{i-1},\upsilon(M,C_{i-1}))$,
              \item the $SBOX$ outputs $YES$ for $C_n$, \item and either
              $\upsilon(M,C_{n})=\bot$ or $TBOX(C_n, \upsilon(M,C_{n}))=\bot$.
          \end{itemize}  The computist  calls $C_{0}C_{1},...,C_{n}$ the
          successful  computation path of $M$ on $x$. The length of a
          computation path is the number of configurations appeared in.

          \item[$c3.$] \emph{Computable Functions:} A partial function
          $f:\Sigma^*\rightarrow\Sigma^*$, $\Sigma=\{0,1\}$, is computed by a procedure $M\in
          \Xi$, whenever for $x\in \Sigma^*$,   the computist can construct a sequence
          $C_{0}C_{1},...,C_{n}$ of configurations in $CONF$ such that
          \begin{itemize}\item $C_0=C_{0,x}$, \item each $C_i$, $i\geq 1$,
              is obtained by applying $TBOX$ on $(C_{i-1},\upsilon(M,C_{i-1}))$,
              \item the $SBOX$ outputs $YES$ for $C_n$, \item and either
              $\upsilon(M,C_{n})=\bot$ or $TBOX(C_n, \upsilon(M,C_{n}))=\bot$,
              
              \item $y_{C_n}=f(x)$.
          \end{itemize}

          \item[$c4.$]\emph{Parallel use of UC:} The computist may start to apply the \emph{UC} on a   procedure  $M$ and a string $x$, however, he does not have to keep on the computation until the
          successful box outputs $Yes$. The computist can leave the
          computation $M$ on $x$ at any stage of his activity  and choose
          freely any other  procedure $M'$ and any other string $x'$
          to apply \emph{UC} on them.

          \item[$c5.$] \emph{Time Complexity:} The   \emph{time
              complexity} of computing a  procedure $M$ on an input string
          $x$, denoted by $time_M(x)$,   is $n$, for some $n\in
          \mathbb{N}$, whenever the computist constructs  a successful computation path
          of the syntax-procedure $M$ on $x$ with length $n$.

          \item[$c6.$] \emph{Time Complexity:} Let $f:\mathbb{N}\rightarrow \mathbb{N}$  and
          $L\subseteq \Sigma^*$. The computist says that the time complexity
          of the computation of the language $L$  is less than $f$ whenever there exists a
          procedure  $M\in \Xi$ such that the language defined by the
          computist via $M$, i.e., $L(M)$, is equal to $L$, and for all
          $x\in L$, $time_M(x)<f(|x|)$.
          
          \item[$c7.$] \emph{Complexity Classes:} The computist  defines the time complexity class
          $\mathrm{P}\subseteq 2^{\Sigma^*}$ to be the set of all
          languages that he/she can computes  in polynomial time. He/She
          also defines the complexity class $\mathrm{NP}\subseteq
          2^{\Sigma^*}$ as follows:
          \begin{itemize}
              \item[] $L\in \mathrm{NP}$  iff there exists $J\in \mathrm{P}$
              and  a polynomial function $q$ such that for all $x\in
              \Sigma^*$,\begin{center} $x\in L\Leftrightarrow\exists y\in
                  \Sigma^* (|y|\leq q(|x|) \wedge (x,y)\in J)$.\end{center}
          \end{itemize}
          \item[$c8.$] \emph{Turing Computability:} The \emph{UC} of the simulation is sufficiently  powerful such that the computist
          can compute    all partial  recursive (Turing computable) function using the \emph{UC}. That is, for every Turning machine $T$, there exists a procedure $M\in \Xi$ that is $L(M)=L(T)$. 
          
      \end{itemize}

      \begin{itemize}
          \item[$c9.$] \emph{Black Box:}  The computist lives inside the simulation $S$ and does not access to the structure of  $TBOX$ and $SBOX$. For the computist, $TBOX$ and $SBOX$ are black boxes.
          
          \item[$c10.$] \emph{ Deduction System:} There exist a procedure  $G\in \Xi$  such that  for every proof $z$, and every formula $\psi$, $G(\langle z,\psi' \rangle )= 1$ means $z$ is the proof of $\psi$. The computist using the procedure $G$ does his logical reasoning. Note that the procedure $G$ is also computed by the universe-computer UC.
          
          \item[$c11.$] \emph{Soundness:} If the computist can prove a sentence $\phi$, then $\phi$ is True in the simulation.
          
          \item[$c12.$] The universe-computer of the simulation, $UC$, works in linear time. Also, the computist knows that the UC works in linear time. There exists a universal clock in the simulation $S$, shown by $Clock_S$, such that the computist uses it and measures passing time.  That is when the computist insert a configuration $C$ to $SBOX$ ($TBOX$)  the number of clocks of $Clock_S$ that the computist waits to receive the output is linear with respect to the length of $C$.  
          
      \end{itemize}
  \end{definition}

 In c10, we insist on proof procedure $G$, we could similarly talk about "walking procedure", "breathing procedure" and etc, but we disregard them as we aim to construct a counter-model for $\mathrm{ [P= NP]}$. As an example of a computer-simulation model, one may see the example~\ref{tce}. 
 
\subsection{Indistinguishability}
 In the following definitions, we formally describe "indistinguishability" between computer-simulation models.
 
 \begin{definition}
     Suppose $S$ is a simulation and $M$ is a procedure in $S$. The experience of the computist on the procedure $M$, denoted by $EXP_S(M)$, is defined to be the set of all pairs $(x,y)$ where the computist ran procedure $M$ on input $x$ and received output $y$. Note that always, $EXP_S(M)$ is a finite set (although new pairs are always added to it) since, at each stage of time, the computist could only run the procedure on only finite inputs.
  
 \end{definition}

\begin{definition}
 Suppose $S$ and $S'$ are two computer-simulation model where UC is the universe-computer of $S$ and $UC'$ is the universe-computer of $S'$.
 We say that the computist $A$ who is an inhabitant in $S$ cannot say if he lives inside $S$ or $S'$, $S\sim_A S'$, whenever
 \begin{itemize}
     \item[1-] $INST_S$ = $INST_{S'}$,
     \item[2-] At each stage of time, the computist $A$ based on his experiences of the procedures cannot say that the universe-computer of $S$ is UC or $UC'$.  
 \end{itemize}
\end{definition}
 
  We formally defined what we mean by a computer-simulation model and indistinguishability of models. In computer-simulation model theory, CSMT, the reasoner (the computist) is also involved in the model.   To show the usefulness and importance of CSMT, we study the famous problem P vs NP.   Using CSMT, in the following  of this paper, we show that there exists no proof for   $\mathrm{P=NP}$.
\begin{itemize}
\item We introduce two notions: non-predetermined functions and persistently evolutionary Turing machines. 

\item We construct a computer simulation world, named  $E$, which $\mathrm{P\neq NP}$ in this world. The UC of $E$ is a  persistently evolutionary Turing machine. 

\item We discuss that the reality is not distinguishable from a persistently evolutionary model.

\end{itemize}

\section{Non-predetermined functions}
 \noindent The most important and fundamental notion of mathematics is function. A function is a process associating each element $x$ of a set $X$, to a single element $f(x)$ of another set $Y$. Classically, we assumed that all functions in mathematics are pre-determined.

\noindent In this section, we discuss functions that are not pre-determined and they are eventually determined through the way we start to associate $f(x)$ for every element $x\in X$.

\noindent We introduce Persistently Evolutionary Turing machines that compute non-predetermined functions.

\noindent We also discuss that if the UC of our reality persistently evolves then our reality will have alternate (alternate realities are worlds that could exist next to, in parallel of, or in place of our own, if we interacted with the world differently).

Let $f$ be a process that associates elements of a set $X$ to the elements of another set $Y$. If the process $f$ works well-defined then we know $f$ as a mathematical function.
But being well-defined does not force the process $f$ to be predetermined. 

Suppose that $x_1$ and $x_2$ are two different elements of $X$. I want to use the process $f$ to determine the value of $f$ for $x_1$ and $x_2$. It is up to me to first perform the process $f$ on $x_1$ or $x_2$. 

If $f$ is predetermined the it does not matter to perform the process on ordering $x_1x_2$ or ordering $x_2x_1$. But if $f$ is non-predetermined then different order of inputs  causes different \textit{alternate functions} which one of them is the function that we are constructing.

 Alternate functions are functions that could exist in place of our function (if we interacted with different ordering of inputs, those alternate could happen).

For example, consider the following process $g$:
\begin{itemize}
    \item \textit{$W$ is a set which is initially empty.}
    
    \item \textit{for a given natural number $n$, if there exists a pair $(n,z) \in W$ then output $g(n)=z$, else update $W=W\cup\{(n,|W|+1)\}$ and output $g(n)=|W|+1$.}
\end{itemize}

The function $g$ is a non-predetermined function over natural numbers. I  input $7,9,1,11$ and the process will associates the following: $g(7)=1$, $g(9)=2$, $g(1)=3$, and $g(11)=4$. The value of other numbers are yet non-predetermined and as soon as I perform process $g$ on each number the value is determined.

\begin{itemize}
    \item[-] The  function $g$ is not predetermined. It is determined eventually, but it is always  undetermined for some numbers. 
    \item[-] The function $g$ is well-defined, and associates to each input a single output.
    \item[-] For every natural number, the function $g$ is definable. 
    \item[-] If I inputted  $9,1,7,11$, I would have an alternate $g$ which would associate: $g(9)=1$, $g(1)=2$, $g(7)=3$, and $g(11)=4$.
\end{itemize}

\section{Persistently Evolutionary Turing machines}
Persistently Evolutionary Turing machines are an extension of the notion of Turing machines in which the structure of the machine can evolve through each computation.

A Turing machine consists of a set of states $Q$, and a table of transitions $\delta$ which both are fixed and remain unchanged forever. In Persistently Evolutionary Turing machines, we allow the set of states and the table of transitions changes through each computation. 

As a Persistently Evolutionary Turing Machine $PT$   computes on an input string $x$, the machine $PT$ can \underline{add} or \underline{remove} some of its states and transitions, and thus after the computation on the input $x$ is completed, the sets $Q$ and $\delta$ changed.
  
However these changes are persistent, that is, if we already input a string $x$ and the machine outputs $y$, then whenever we again input $x$ the machine outputs the same $y$, and the changes of states and transitions does not violate well-definedness.

One may consider that we have a BOX and we set a Turing machine in the box with some rules of \underline{adding} and \underline{removing} of states and transitions. Then, We input strings to the BOX and for each string, the BOX outputs a single string. The machine in the BOX changes itself but the behavior of the BOX is well-defined.
 
 Persistently Evolutionary Turing Machines computes non-predetermined functions. 
 
 In the following example, we introduce a persistently evolutionary nondeterministic finite automate~\cite{kn:mom0}.   

\begin{example}\label{autool}\emph{ (In the sequel of the paper, we will refer to the persistently evolutionary machine introduced in this example by $PT_{1}$). }
    
    \noindent Define $\mathrm{Evolve}:
    \mathrm{NFA}_1\times\Sigma^*\rightarrow \mathrm{NFA}_1$ as
    follows\footnote{$\mathrm{NFA}_1$ is the class of all
        nondeterministic finite automata   $M=\langle Q, \Sigma=\{0,1\}, \delta, q_0, F \rangle$,
        where for   each  state $q\in Q$, and $a\in \Sigma$, there
        exists at most one transition from $q$ with label $a$.}:
    
    \noindent Let $M\in\mathrm{NFA}_1$, $M=\langle Q,
    q_0,\Sigma=\{0,1\},\delta:Q\times\Sigma\rightarrow Q, F\subseteq
    Q\rangle\footnote{F is the set of accepting states}$,  and $x\in \Sigma^*$. Suppose $x=a_0a_1\cdots a_k$
    where $a_i\in \Sigma$. Applying the automata $M$ on $x$, one of
    the three following cases may happen:
    \begin{itemize}
        \item[case1.] The automata $M$   reads all $a_0,a_1\cdots ,a_k$
        successfully and stops in an accepting state. In this case,  the structure of the automata does not change   and let $\mathrm{Evolve}(M,x)=M$.
        
        \item[case2.] The automata $M$   reads all $a_0,a_1\cdots ,a_k$
        successfully and stops in a state $p$ which is not an accepting
        state.
        \begin{itemize} 
            \item  If the automata $M$ can  transit from the state $p$ to an  accepting
        state by reading \emph{one } alphabet, then  let $\mathrm{Evolve}(M,x)=M$.
         \item If it cannot transit (from $p$ to an accepting state)  then let
        $\mathrm{Evolve}(M,x)$ to be a new automata $M'=\langle Q,
        q'_0,\Sigma=\{0,1\},\delta':Q'\times\Sigma\rightarrow Q',
        F'\subseteq Q'\rangle$, where $Q'=Q$, $\delta'=\delta$,
        $F'=F\cup\{p\}$.
    \end{itemize}
        \item[case3.] The automata $M$ cannot read all $a_0,a_1\cdots ,a_k$
        successfully,and after reading a part of $x$, say $a_0a_1\cdots
        a_i$, $0\leq i\leq k$, it crashes in  a state $q$ that
        $\delta(q,a_{i+1})$ is not defined. In this case, we let $\mathrm{Evolve}(M,x)$
        be a new automata $M'=\langle Q,
        q'_0,\Sigma=\{0,1\},\delta':Q'\times\Sigma\rightarrow Q',
        F'\subseteq Q'\rangle$, where $Q'= Q\cup \{s_{i+1},s_{i+2},\cdots,
        s_k\}$ (all $s_{i+1},s_{i+2},\cdots, s_k$ are new states
        that does not belong to $Q$), $\delta'=\delta\cup
        \{(q,a_{i+1},s_{i+1}), (s_{i+1},a_{i+2},s_{i+2}),\cdots,
        (s_{k-1},a_k,s_{k})\}$, and $F'=F\cup\{ s_k\}$.
    \end{itemize}
 
\end{example}
The machine $PT_{1}$ persistently evolve, that is, if it (rejected) accepted a string $x$ already, then it would (reject) accept the string $x$ for any future trials as well. The language $L(M)$ is not predetermined and it eventually is determined.

For example, assume that initially $M$ is $Q=\{q_0\}$, $F=\emptyset$, $\delta=\emptyset$. Now I input the string $101$ and according to case~3, the machine $M$ evolves and  new states $q_1,q_2,q_3$ and transitions $(q_0,1,q_1),(q_1,0,q_2),(q_2,1,q_3)$ are added and also $F=F\cup \{q_3\}$. Now if I input the string $10$ then according to case~2, $M$ rejects it. However, If at first I inputted $10$ to the machine then it would accept it.

\subsection{Time complexity of Evolutionary Turing machines}
The time-complexity~\cite{kn:arora0} of Persistent Evolutionary Turing Machines is defined similar to the time-complexity of Turing machines except that for each (\underline{adding})  \underline{removing} of states or transitions, we count one extra clock.

\begin{proposition}\label{timece} The time complexity of the machine $PT_1$ in example~\ref{autool} is linear.
\end{proposition}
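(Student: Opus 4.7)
The plan is to do a direct case analysis following the three cases in Example~\ref{autool}. Recall from the definition just before the proposition that the time complexity of a persistently evolutionary Turing machine counts, on top of the ordinary computation steps, one extra clock for every addition or removal of a state or a transition. So what I want to bound, for an input $x$ with $n=|x|$, is (i) the number of reading steps performed by the automaton on $x$ plus (ii) the number of modifications to $Q$, $\delta$ and $F$ triggered by the $\mathrm{Evolve}$ rule.

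First I would fix $x=a_0a_1\cdots a_k$ with $k=n-1$, start $M$ at $q_0$, and track the computation. In Case~1, the automaton reads every symbol and lands in an accepting state; there are exactly $n$ reading steps and no modifications, giving $n$ clocks. In Case~2, the automaton again reads every symbol (so $n$ reading steps), and then either nothing is changed (first subcase) or the current state $p$ is added to $F$ (second subcase); the subcase test involves inspecting the at most two outgoing transitions of $p$, which costs $O(1)$, and the update is a single modification, so the total is $n+O(1)$ clocks. The only case with nontrivial evolutionary cost is Case~3.

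For Case~3, suppose the computation crashes after reading $a_0\cdots a_i$ in some state $q$ with $\delta(q,a_{i+1})$ undefined. Then the machine performed $i+1$ reading steps. The evolution adds the fresh states $s_{i+1},\ldots,s_k$ (that is, $k-i$ new states), the single transition $(q,a_{i+1},s_{i+1})$ together with $(s_{j},a_{j+1},s_{j+1})$ for $i+1\le j\le k-1$ (that is, $k-i$ new transitions), and inserts $s_k$ into $F$ (one modification). Summing, the evolution costs $(k-i)+(k-i)+1 = 2(k-i)+1$ clocks. Together with the reading cost, the total is $(i+1)+2(k-i)+1 \le 2k+2 = 2n$.

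In all three cases the total number of clocks is bounded by $2n+O(1)$, so the time complexity of $PT_1$ is linear in $|x|$. The only mildly delicate point, and where I would be careful in the write-up, is that the $\mathrm{Evolve}$ rule of Case~3 really does add only $O(n)$ new states and transitions in a single run (rather than, say, examining all of $Q$ to find them), and that the subcase test in Case~2 is genuinely $O(1)$ because the class $\mathrm{NFA}_1$ restricts each state to have at most one outgoing transition per alphabet symbol; both follow immediately from the definitions, so no real obstacle arises.
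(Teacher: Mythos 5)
Your case analysis is correct and is exactly the routine verification the paper leaves implicit by writing only ``It is straightforward'': in each of the three cases of Example~\ref{autool} the reading steps plus the extra clocks for added states, transitions, and accepting-state insertions total at most $2|x|+O(1)$. Your write-up simply supplies the details the paper omits, so there is nothing to add.
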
\begin{proof}
It is straightforward. 
\end{proof}

\subsection{Executable By Computer Devices}
Persistently Evolutionary Turing machines (similar to Turing machines) are executable by computer devices if as soon as the device needed extra memory resources, we are ready to add memory cards to the motherboard of the device.

 Two simulations $V$  and $E$ introduced in example~\ref{tce}, and in definition~\ref{Ee} are computer simulations. The $SBOX$ and $TBOX$ of the $V$ are Turing computable and the $SBOX$ and $TBOX$ od $E$ are Persistently Evolutionary Turing computable.
\section{Alternate Reality} \label{ARe}

Alternate realities are worlds that could exist next to, in parallel of, or in place of our own, if we interacted with the world differently.

If the UC of our reality persistently evolves then we have alternate realities. If this morning, I first had breakfast and then washed my hands, I would be in another alternate reality, but since I first washed my hands and then had breakfast, I am in the alternate world which we call it   "reality". 

Although there are lots of alternate realities, only one of them is actualized. UC evolves as human beings (the one who lives in this simulation)  interact with it. The future is not predetermined and via our interactions, we always are moving to new alternate realities.

\section{ $\mathrm{P= NP}$ Contradicts with Non-predeterminism}
 In following, I construct a computer simulation and show that in this simulation $\mathrm{P}$ is not equal to $\mathrm{NP}$.
  
   As, we discussed the first section, assuming that the reality is a computer simulation then all we (human beings) do (thinking, walking, reasoning, talking, jumping, and etc) are procedures performed by a computer, which we called it "Universe-Computer" (UC). When I think, it is UC that computes, when I prove a theorem, it is UC that computes, when I breathe, it is UC, and etc. Physics, Biology, Logic and etc are all computational process which UC performs. So, we may name human beings (the one who lives in) to be the \underline{computist} as his/her thinking, breathing, walking and etc are all computations done by UC.
 
 What if UC persistently evolves? Then multiverse and alternate realities are possible (see section~\ref{ARe}), and it is not predetermined that we are moving to which alternate realities, and future alternate realities are eventually determined by our interaction with UC. 

 In the next section, we construct a  computer simulation that its UC persistently evolves and it is not predetermined that what alternate realities happen as we move to the future. We show that $\mathrm{P=NP}$ contradicts with non-predeterminism and thus in this simulation $\mathrm{P}$ is not equal to $\mathrm{NP}$.
 
\section{A Computer Simulation which $\mathrm{P=NP}$ does not hold in} 
 
 According to the reasoning method $\mathrm{CSMT} $, If we want to show that $\mathrm{P=NP}$ is not provable, we need to construct a simulation  such that 
 \begin{itemize}
     \item[1.] we (as the one who lives in) cannot distinguish the simulation from the reality.
     \item[2.] $\mathrm{P=NP}$ does not hold in the simulation
 \end{itemize}

 We first, in example~\ref{tce}, introduce a computer-simulation model named $V$. Then in definition~\ref{Ee}, we slightly change the UC of $V$ and construct a computer-simulation model named $E$ which its UC persistently evolves. Then in theorem~\ref{subex}, we prove that $\mathrm{ P \neq NP}$ in $E$.

\begin{example}\label{tce} We introduce a computer-simulation model $V$ as follows:

Let 
\begin{itemize}
    \item[] $Q_T=\{h\}\cup\{q_i\mid   i\in \mathbb{N}\cup\{0\}\}$,
    \item[] $\Sigma,\Gamma$ be two finite set  with $\Sigma\subseteq \Gamma$
    and
    \item[]  $\Gamma$ has a symbol $\triangle \in \Gamma-\Sigma$.
\end{itemize}

The \emph{UC} of the simulation $V$,  $U_v=(TBOX_v,SBOX_v, INST_v, CONF_v)$ is
defined as follows:
\begin{itemize}
\item[1)] $INST_v=\{[(q,a)\rightarrow(p,b,D)]\mid p,q\in Q_T,
a,b\in \Gamma, D\in\{R,L\}\}$,

$(INST_v)_0=\{[(q,a)\rightarrow(p,b,D)]\in INST_s\mid q=q_0 \}$,
and

%$[(q,a)\rightarrow(p,b,D)]\simeq %[(q',a')\rightarrow(p',b',D')]$
%iff $q=q'$ and $a=a'$.

\item[2)] $CONF_v=\{(q,x\underline{a}z)\mid q\in Q_T, x,z\in
\Gamma^*, a\in \Gamma\}$,   for each $x\in \Sigma^*$,
$C_{0,x}=(q_0,\underline{\triangle} x)$, and for each
$C=(q,x\underline{a}z)\in CONF_s$, $y_C=xaz$.

\end{itemize}
Note that the programming language of $U_v$ is exactly the
standard syntax of configurations and transition functions of
Turing machines.
\begin{itemize}

\item[3)] Let $C=(q,xb_1\underline{a}b_2y)$ be an arbitrary
configuration then

\begin{itemize} \item $TBOX_v(C,[(q,a)\rightarrow (p,c,R)])$ is defined to be $C'=(p,xb_1c\underline{b_2}y)$,

\item $TBOX_v(C,[(q,a)\rightarrow (p,c,L)])$ is defined to be
$C'=(p,x\underline{b_1}cb_2y)$, and

\item for other cases $TBOX_v$ is defined to be $\bot$.

\end{itemize}
\item[4-] Let $C\in CONF_v$ be arbitrary
\begin{itemize}
\item if $C=(h,\underline{\triangle}x)$ then $SBOX_v(C)$ is
defined to be $YES$,

\item if $C=(h,x\underline{\triangle})$ then $SBOX_v(C)$ is
defined to be $YES$, and

\item otherwise $SBOX_v(C)$ is defined to be $NO$.
\end{itemize}

\item[5-] For each $M\in \Xi_v$, and $C=(q,x\underline{a}y)\in
CONF_v$,   if there exists $[(q,a)\rightarrow(p,b,D)]\in M$ for
some $p\in Q_T, b\in \Gamma$, and $D\in \{R,L\}$, then
$\upsilon(M,C)$ is defined to be $[(q,a)\rightarrow(p,b,D)]$ else
it is defined to be  $\bot$.
\end{itemize}
 
\end{example}

\begin{theorem}\label{lang} Accepting the Church-Turing thesis,  the computist in the simulation $V$, can compute all the procedures that we (human beings) can compute in reality ~\footnote{If we accept Church-Turing thesis, all the procedures in the reality are Turing computable.}.

\begin{itemize}
\item[1-]For each  procedure $M\in \Xi_v$, there exists a
Turing machine $T$ such that for every $x\in
\Sigma^*$, $x\in L(M)$ and $time_M(x)=n$ iff $x\in L(T)$ and
$time_T(x)=n$.

\item[2-] For each Turing machine $T$, there exists a
 procedure $M\in \Xi_s$,  such that for
every $x\in \Sigma^*$, $x\in L(T)$ and $time_T(x)=n$ iff $x\in
L(M)$ and $time_M(x)=n$.

\end{itemize}
\end{theorem}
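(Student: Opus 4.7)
The plan is to exploit the obvious structural match between the simulation $V$ and standard deterministic Turing machines: the instruction syntax $[(q,a)\to(p,b,D)]$, the configuration syntax $(q,x\underline{a}z)$, and the transition rule encoded by $TBOX_v$ are literally the syntax of one-tape Turing machines on alphabet $\Gamma$ with blank symbol $\triangle$. So the proof will amount to carefully checking that a procedure $M\in\Xi_v$ \emph{is} essentially a Turing machine program and conversely.

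For direction (1), I would take a procedure $M\in\Xi_v$ and define $T_M=\langle Q_M,\Sigma,\Gamma,\delta_M,q_0,\{h\}\rangle$, where $Q_M$ is the finite set of states occurring in $M$ together with $q_0$ and $h$, and $\delta_M(q,a)=(p,b,D)$ iff $[(q,a)\to(p,b,D)]\in M$. The determination condition for procedures in Definition~\ref{proced} guarantees that $\delta_M$ is a well-defined partial function, i.e.\ $T_M$ is deterministic. Invoking clause $c2$ and the definition of $SBOX_v$, a successful computation path $C_0\ldots C_n$ of $M$ on $x$ coincides step-for-step with a halting computation of $T_M$ on $x$ that terminates with the head scanning a blank at one end of the tape, which is the standard halting convention. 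Thus $L(M)=L(T_M)$ and the length $n$ of the path equals $time_{T_M}(x)$.

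For direction (2), I would go the other way: given a Turing machine $T=\langle Q,\Sigma,\Gamma,\delta,q_0,F\rangle$, first normalize $T$ so that acceptance is by halting in a distinguished state $h$ with the head on the left $\triangle$; this is the standard simulation that inflates running time by at most a constant additive term, and for an exact match of $time_T$ with $time_M$ one picks the form of $T$ that already halts this way (which is included in clause $c8$'s statement about Turing computability). Then take $M=\{[(q,a)\to(\delta(q,a))]\mid q\in Q,\ a\in\Gamma,\ \delta(q,a)\text{ defined}\}$. Since $T$ is deterministic, the determination condition holds, so $M\in\Xi_v$. Finally, choose $\upsilon(M,C)$ by the unique enabled instruction, so that a run of $T$ on $x$ is reproduced verbatim as the successful computation path of $M$ on $x$, giving $L(T)=L(M)$ and $time_T(x)=time_M(x)$.

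The main obstacle, and the only place where care is required, is the acceptance convention for $V$, which is rigid: $SBOX_v$ outputs \textsf{YES} only when the machine is in state $h$ scanning $\triangle$ at the left or right end. For direction (2) this means one cannot simply inherit an arbitrary accepting set $F$; one must pre-process $T$ into an equivalent Turing machine with a single halting state $h$ reached exactly when $T$ accepts, with the head driven to an end-of-tape blank. If one insists on the \emph{exact} equality $time_T(x)=time_M(x)$ rather than equality up to a linear factor, the cleanest solution is to take, as the witnessing $T$ in the theorem, precisely the normalized form just described; the Church--Turing thesis (cited in the theorem) guarantees that this restricted class of machines still captures every computable function, so nothing is lost.
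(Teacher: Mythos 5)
Your proposal is correct and follows the intended route: the paper's own ``proof'' of Theorem~\ref{lang} consists entirely of the sentence ``It is straightforward for us not for the computist who lives inside the simulation $V$,'' so you are supplying the argument the paper omits rather than diverging from it. The correspondence you set up (procedures in $\Xi_v$ as deterministic one-tape Turing machine programs, with the determination condition of Definition~\ref{proced} giving well-definedness of $\delta_M$, and the normalization of an arbitrary $T$ to the halting convention forced by $SBOX_v$) is exactly the content being waved at, and your observation that the rigid acceptance convention means direction (2) only holds verbatim for suitably normalized machines is a legitimate point of care that the paper does not address at all.
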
\begin{proof} It is straightforward for us not for the computist who lives inside the simulation $V$.\end{proof}

We as the one who creates the simulation $V$ can prove the above theorem since we know the inner structure of $TBOX_v$ and $SBOX_v$. The computist who
lives inside the simulation cannot be aware of the above theorem as he
does not have access to the inner structure of the \emph{UV}. However, he/she is always free to propose a  thesis
about his/her world (simulation)   similar to what we did, and named it Church-Turing thesis~\footnote{
Note that we (the human beings) do not have access to the
inner structure of the reality, we call the Church Turing statement to be a thesis.}.

\begin{definition}
A \emph{UC}, $U=(TBOX,SBOX, INST, CONF)$ is called to be
static, whenever the inner structure of two boxes $TBOX$ and
$SBOX$ does not change due to interaction with the computist. It
is called persistently evolutionary whenever the  inner
structure of at least one boxes changes but persistently, i.e., in
the way that the boxes work well-defined.
\end{definition} The \emph{UC}  defined in the example~\ref{tce} is static.

\begin{definition}\label{Ee} We introduce a computer simulation $E$ which the \emph{UC} of $E$, $$U_e=(TBOX_e,SBOX_e,  INST_e,CONF_e)$$ is defined as follows. \begin{itemize}

\item[] Two sets $INST_e$ and $CONF_e$ are defined to be the same
$INST_s$ and $CONF_s$ in example~\ref{tce} respectively, and consequently the set of
procedures of the  $U_e$, i.e., $\Xi_e$ is the
same $\Xi_s$.

\item[] The transition box $TBOX_e$ is also defined similar to the
transition box $TBOX_s$ in example~\ref{tce}.

\item[] The successful box $SBOX_e$ is defined as follows: let
$C\in CONF_e$ be arbitrary
\begin{itemize}
\item if $C=(h,\underline{\triangle}x)$ then $SBOX_s(C)=YES$,

\item if $C=(h,x\underline{\triangle})$ then the $SBOX_e$ works
exactly similar to the   the persistently evolutionary
 machine $PT_{1}$ introduced in example~\ref{autool}. On input $x$,  if $PT_1$ outputs $1$, the successful box
outputs $YES$, and

\item otherwise $SBOX_e(C)=NO$.
\end{itemize}
\end{itemize}
\end{definition}

\begin{proposition}
    Computer-simulations $E$ and $V$ satisfies conditions $c1-c12$ of definition~\ref{CE}.
\end{proposition}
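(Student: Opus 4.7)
The plan is to verify the twelve conditions \emph{c1}--\emph{c12} separately for each of $V$ and $E$, by reading off the explicit construction of $INST$, $CONF$, $TBOX$ and $SBOX$ in example~\ref{tce} and definition~\ref{Ee}, and matching them against the requirements of definition~\ref{CE}. Because the two simulations share the same $INST$, $CONF$, $\Xi$ and $\upsilon$, and differ only in one branch of the successful box, most of the work can be done uniformly.

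First I would dispatch the conditions that are purely structural: \emph{c1} (free will) holds because the computist, as stipulated by definition~\ref{ucuc}, may apply $TBOX$ and $SBOX$ to any argument he chooses, and nothing in the construction of $V$ or $E$ restricts that freedom; \emph{c4} (parallel use) and \emph{c9} (black box) are inherited from the generic framework; \emph{c2}, \emph{c3}, \emph{c5}, \emph{c6}, \emph{c7} are definitions of derived notions ($L(M)$, computable $f$, $time_M$, $\mathrm{P}$, $\mathrm{NP}$) that make sense as soon as $TBOX$, $SBOX$ and $\upsilon$ are in place, and \emph{c11} (soundness) is a property of the deduction system $G$ postulated in \emph{c10} rather than of the UC.

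Next I would address the substantive conditions. For \emph{c8} (Turing computability), $V$ is handled by theorem~\ref{lang}. For $E$, since $INST_e = INST_v$ and $TBOX_e = TBOX_v$, every procedure constructed in $V$ is syntactically a procedure in $E$; I would then observe that given any Turing machine $T$ one can always code up a procedure $M \in \Xi_e$ whose halting configurations have the form $(h, \underline{\triangle} x)$, so that acceptance is witnessed by the first clause of $SBOX_e$, which agrees with $SBOX_v$ and does not invoke $PT_1$. This standard ``move the head to the leftmost $\triangle$ before halting'' trick transfers Turing-computability from $V$ to $E$ without interference from the evolutionary branch. Condition \emph{c10} is then obtained by taking for $G$ any Turing machine implementing the recursive proof-checking relation of a fixed sound deduction system, and invoking \emph{c8}.

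Finally, for \emph{c12} (linear-time UC), I would note that both $TBOX_v$ and $SBOX_v$ manipulate a configuration by a bounded amount of work and so are linear in the length of their input; since $TBOX_e = TBOX_v$ and $SBOX_e$ coincides with $SBOX_v$ except on the right-$\triangle$ branch where it calls $PT_1$, linearity of $SBOX_e$ reduces to linearity of $PT_1$, which is exactly proposition~\ref{timece}. A universal clock $Clock_S$ can then be attached to count the steps of the computer device executing these boxes. The main (and really only) non-routine point in the whole argument is this verification of \emph{c8} for $E$: one must convince oneself that making $SBOX_e$ persistently evolutionary on the right-$\triangle$ branch does not shrink the class of computable languages, and the reduction to the unaffected left-$\triangle$ branch described above is precisely what settles that concern.
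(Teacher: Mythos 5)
Your proof is correct and, at the one point the paper actually argues, takes the same route: condition \emph{c12} for $E$ is reduced to the linearity of $PT_1$, i.e.\ to proposition~\ref{timece}, exactly as in the paper's one-line proof (``we only need to discuss \emph{c12} for the simulation $E$''). Where you go beyond the paper is in treating \emph{c8} for $E$ as a second substantive obligation and discharging it explicitly with the observation that $SBOX_e$ differs from $SBOX_v$ only on configurations of the form $(h,x\underline{\triangle})$, so a procedure that always halts with the head on the leftmost $\triangle$ is evaluated identically in $V$ and $E$; the paper instead delegates this to theorem~\ref{prf}, whose proof is itself only the assertion ``it is straightforward.'' Your version is therefore a strictly more complete verification of the same decomposition, and the leftmost-$\triangle$ reduction is precisely the missing detail one would want to see spelled out.
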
\begin{proof}
We only need to discuss $c12$ for the simulation $E$. By proposition~\ref{timece}, the UC of $E$ works in linear time. 
\end{proof}

  Note that the $SBOX_e$  of the $U_e$ is
a persistently evolutionary machine. For the computist
  the set of  procedures (algorithms) in the simulation
$E$ is the same set of  procedures in the simulation
$V$, i.e $\Xi_v=\Xi_e$. However for some procedures, say $M$, the language $L(M)$ is the simulation $E$ could be different from the language $L(M)$ in $V$. For some $M\in \Xi_e$, we have $L(M)$ is a non-predetermined language. The procedure $M$ is fixed and does not change through time, but since the structure of UC ($SBOX_e$)  changes through time.

The computist in the simulation $E$ thinks that everything is static since the structure of the procedure does not change and the computist also does not have access to the structure of the UC of the simulation.

\begin{theorem}\label{prf}   Every recursively  enumerable language can be computed in the environment
$E_e$. That is, for every recursively enumerable language $L$,
there exists $M\in \Xi_e$ such that $L=L(M)$.
\end{theorem}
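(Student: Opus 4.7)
The plan is to exploit the asymmetry in $SBOX_e$: configurations of the form $(h,\underline{\triangle}y)$ are accepted unconditionally, and only the ``right-end'' halt configurations $(h,x\underline{\triangle})$ are handed off to the persistently evolutionary machine $PT_1$. So if we can always make our procedure halt at a \emph{left-}$\triangle$ accepting configuration, then the evolutionary behaviour of $SBOX_e$ is never consulted and $E$ behaves exactly like $V$ on that procedure. Since $V$ computes precisely the recursively enumerable languages by Theorem~\ref{lang}, this should suffice.

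More concretely, let $L$ be recursively enumerable and let $T$ be a Turing machine with $L(T)=L$. I will build a procedure $M\in\Xi_e$ (equivalently, $M\in\Xi_v$, since $\Xi_e=\Xi_v$) as follows. First, take the Turing-machine encoding of $T$ and, via Theorem~\ref{lang}, translate it into a procedure $M_T\in\Xi_v$ with $L(M_T)=L(T)=L$ under $V$'s semantics. Then modify $M_T$ into $M$ by intercepting every transition that would enter the halting state $h$: replace each such transition by a block of instructions that, instead of going to $h$, enters a fresh ``rewind'' state $r$ and marches the head leftward across the tape, using standard Turing-style instructions, until the leftmost $\triangle$ (the one already present in $C_{0,x}=(q_0,\underline{\triangle}x)$) is scanned; at that moment the instruction transitions into $h$ and no instruction of $M$ is applicable. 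The determination condition of Definition~\ref{proced} is easily preserved since every added instruction has a unique source state/symbol pair.

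The verification then runs as follows. If $x\in L$, then $T$ accepts $x$, so the computist can construct in $V$ a successful $M_T$-path from $C_{0,x}$; the same prefix of the path exists in $E$ for $M$, followed by the rewind block, ending in a halting configuration of the form $(h,\underline{\triangle}y)$. By the first clause of the definition of $SBOX_e$, this configuration yields $YES$ regardless of the current state of $PT_1$, so $x\in L(M)$ in $E$. Conversely, if $x\notin L$, then no simulation of $T$ on $x$ ever reaches $h$, so no $M$-computation from $C_{0,x}$ can reach state $h$ either (the rewind block is entered only after an $h$-transition would have fired in $T$); thus no halting configuration with $SBOX_e=YES$ is reachable and $x\notin L(M)$.

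The only real obstacle is the \emph{only if} direction: one must be sure that the rewrite of $M_T$ cannot accidentally produce some new accepting halt, in particular a right-end halt $(h,x\underline{\triangle})$ for which $SBOX_e$ would query $PT_1$ and might return $YES$ spuriously. This is handled by making the rewind block the unique way of reaching $h$ in $M$ and arranging its final instruction so that $h$ is entered only with the head scanning the leftmost $\triangle$; then every halting configuration of $M$ has the form $(h,\underline{\triangle}y)$ and the $PT_1$ branch of $SBOX_e$ is never invoked, making $E$'s semantics for $M$ coincide with $V$'s semantics for $M_T$ and giving $L(M)=L$.
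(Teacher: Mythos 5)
Your proposal is correct, and it is considerably more substantive than the paper's own argument, which consists entirely of the sentence ``It is straightforward: for each Turing machine $T$, one may construct a procedure $M\in\Xi_e$ such that $L(T)=L(M)$.'' The paper never confronts the one point at which the claim is actually delicate: $\Xi_e=\Xi_v$, but $SBOX_e$ differs from $SBOX_v$ precisely on right-end halting configurations $(h,x\underline{\triangle})$, where it defers to the persistently evolutionary machine $PT_{1}$, so the naive translation of $T$ supplied by Theorem~\ref{lang} need not have the same language in $E$ as in $V$. You identify exactly this obstacle and neutralize it by normalizing every accepting computation to terminate in a left-end configuration $(h,\underline{\triangle}y)$, which $SBOX_e$ accepts unconditionally; then $PT_{1}$ is never consulted, $E$'s semantics coincide with $V$'s on your procedure, and Theorem~\ref{lang} finishes the job. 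That is the right idea, and it genuinely completes a proof the paper only asserts.

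One mechanical detail of your rewind block needs repair. Under the $TBOX_v$ conventions of Example~\ref{tce}, an instruction $[(r,\triangle)\rightarrow(h,\triangle,D)]$ fired while scanning the leftmost $\triangle$ does not leave the head on that $\triangle$: for $D=L$ the transition is $\bot$ (no cell to the left), and for $D=R$ it yields $(h,\triangle\underline{b}y)$, which falls into the ``otherwise'' clause of $SBOX_e$ and is rejected. You should overshoot and return: add $[(r,\triangle)\rightarrow(r',\triangle,R)]$ together with $[(r',b)\rightarrow(h,b,L)]$ for each $b\in\Gamma$, which produces the desired $(h,\underline{\triangle}by')$ with no instruction of $M$ applicable (the empty input can be handled as a special case). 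This is a two-state patch and does not disturb the structure or the correctness of your argument.
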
\begin{proof}
 It is straightforward. For each Turing machine $T$, one may
 construct a  procedure  $M\in\Xi_e$ such that $L(T)=L(M)$.
\end{proof}

\begin{proposition}\label{pee}   The complexity class $\mathrm{P}$ is a subset of
$\mathrm{P}_{E}$.
\end{proposition}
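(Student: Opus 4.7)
The plan is to take a language $L \in \mathrm{P}$ in the ordinary (Turing) sense, obtain a polynomial-time Turing machine $T$ deciding $L$, convert $T$ into a procedure in $\Xi_e$ via Theorem~\ref{lang} (together with $\Xi_v = \Xi_e$), and then modify that procedure so that \emph{only the deterministic branch} of $SBOX_e$ is ever used to decide acceptance. This last step is essential: the second clause of $SBOX_e$ (the case $C=(h,x\underline{\triangle})$) delegates the acceptance decision to the persistently evolutionary machine $PT_1$, so any acceptance routed through that clause would make the resulting language non-predetermined and would break the identification with the fixed language $L$.

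First I would fix $L \in \mathrm{P}$ and a deterministic Turing machine $T$ deciding $L$ in time $p(n)$. Applying Theorem~\ref{lang}, I obtain a procedure $M_0 \in \Xi_v = \Xi_e$ with the same time behaviour as $T$. Then I would construct a procedure $M \in \Xi_e$ that simulates $M_0$ and, as soon as $M_0$ would halt in its accepting state, performs a bookkeeping sub-procedure that (i) walks the head all the way to the leftmost $\triangle$ (the one placed at the beginning of the initial configuration $C_{0,x}=(q_0,\underline{\triangle}x)$) and (ii) enters the halting state $h$ exactly on that cell. Thus every accepting computation of $M$ terminates in a configuration of the shape $(h,\underline{\triangle}z)$, which by the first clause of $SBOX_e$ is accepted unconditionally. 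Symmetrically, whenever $M_0$ would reject, I would have $M$ enter a designated non-halting sink state $q_{\mathrm{rej}} \neq h$ with $TBOX_e$ undefined there, so that $SBOX_e$ returns $NO$ by the third (default) clause.

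Next I would verify correctness and complexity. By construction, $x \in L(M)$ iff the simulated $T$ accepts $x$ iff $x \in L$; crucially, the second clause of $SBOX_e$ is never triggered, so the evolutionary behaviour of $PT_1$ is irrelevant to $L(M)$, and $L(M)$ is the fixed, predetermined set $L$. For the time bound, the simulation of $T$ costs at most $p(|x|)$ steps, the rewind to the leftmost $\triangle$ costs at most $p(|x|) + |x| + O(1)$ further steps (the tape head never travelled farther than that), and by condition $c12$ the $SBOX_e$-evaluation at the final configuration contributes only linearly in the length of that configuration. The total is still polynomial in $|x|$, so $L \in \mathrm{P}_E$.

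The main obstacle, and essentially the only substantive point, is ensuring that acceptance is forced through the benign first clause of $SBOX_e$; once that routing is arranged, correctness and polynomial time are routine. A secondary technical check is that the construction respects the determination condition in Definition~\ref{proced} (so that the set of added instructions genuinely forms a procedure in $\Xi_e$), which is immediate because the rewind and sink sub-procedures use fresh states and hence introduce no conflicting transitions.
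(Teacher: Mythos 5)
Your argument is correct, and it is considerably more than the paper offers: the paper's entire proof of Proposition~\ref{pee} is the sentence ``It is straightforward,'' so there is no written argument to compare against. What your write-up adds is precisely the point that makes the claim non-trivial and that the one-line proof silently glosses over: a procedure imported from $\Xi_v$ via Theorem~\ref{lang} need not compute the same language in $E$ as in $V$, because $SBOX_e$ agrees with $SBOX_v$ only on configurations of the form $(h,\underline{\triangle}x)$, while on configurations $(h,x\underline{\triangle})$ it delegates to the evolutionary machine $PT_1$ and hence yields a non-predetermined language. Your rerouting trick --- rewinding the head and forcing every accepting halt into the shape $(h,\underline{\triangle}z)$, with rejection diverted to a sink state caught by the default $NO$ clause --- is exactly the repair needed, it costs only a polynomial overhead, and your checks of the determination condition of Definition~\ref{proced} and of the linear $SBOX$ cost from $c12$ are the right residual obligations (the only further routine detail worth a sentence is ensuring the simulation never writes a spurious $\triangle$, so that the rewind can recognize the leftmost cell). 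In short: you have supplied the proof the paper omits, and your observation that the naive transfer through Theorem~\ref{lang} does not work as stated is a genuine, if small, correction to the impression the paper gives that the inclusion is immediate.
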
\begin{proof}
It is straightforward.\end{proof}

The converse of   Theorems~\ref{prf}, and~\ref{pee} do not hold
true. In the simulation $E$, the computist can compute some languages in
polynomial time which are not predetermined (see
the proof of theorem~\ref{subex}).

  The item c1 of
 definition~\ref{CE} says that the computist is free in the way of his/her interactions with the UC. In the simulation $E$, the UC  is a  persistently evolutionary machine and based on different orderings of the interactions of the computist, we would have different alternate futures.

\begin{definition}
We say a function $f:\mathbb{N}\rightarrow \mathbb{N}$ is
sub-exponential, whenever there exists $t\in \mathbb{N}$ such that
for all $n>t$, $f(n)<2^n$.
\end{definition}

\begin{theorem}\label{subex} % \textbf{(GV)}. 
In the simulation $E$, 
\begin{center}
    $\mathrm{NP_E \neq P_E}$.
\end{center}
That is, there exists a  procedure $M\in \Xi_e$ such that

\begin{itemize}
\item the language $L(M)$ that the computist  
computes  through $M$ is not predetermined,

\item the language $L(M)$  belongs to the class
$P_{E}$, 

\item there exists no  procedure $M'\in \Xi_e$, such that
 $L(M')$  is equal to $L'= \{x\in \Sigma^*\mid\exists y
(|y|=|x|\wedge y\in L(M))\}$, and for some $k\in \mathbb{N}$, for
all $x\in L(M')$, if $|x|>k$ then
\begin{center}
$time_M(x)\leq f(|x|)$
\end{center}
where $f:\mathbb{N}\rightarrow \mathrm{N}$ is a sub-exponential
function. In other world, $L'$ is in $\mathrm{NP_E}$ but not in $\mathrm{P_E}$.

\end{itemize}
\end{theorem}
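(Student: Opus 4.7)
The plan is to construct a procedure $M$ whose language coincides with $PT_1$'s accepted set, then apply a Baker-Gill-Solovay style adversary argument: the non-predeterminism of $PT_1$ combined with the fact that a sub-exponential $M'$ cannot probe all $2^n$ length-$n$ strings leaves the computist free to force either value of $L'(0^n)$ on some untouched string. Let $M$ be the procedure that on input $x$ writes $x$ to the tape, positions the head at the trailing $\triangle$, and enters the halting state $h$, so that the terminal configuration is $(h,x\underline{\triangle})$ and $SBOX_e$ invokes $PT_1$ on $x$. Then $L(M)=L(PT_1)$ is non-predetermined by the discussion following example~\ref{autool}. Writing $x$ and positioning the head cost $O(|x|)$, and $PT_1$ is linear by proposition~\ref{timece}, so $M$ runs in linear time and $L(M) \in P_E$. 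Membership $L' \in NP_E$ follows directly from condition c7 using $J=\{(x,y) : |y|=|x| \wedge y \in L(M)\}$ as the polynomial-time verifier.

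For $L' \notin P_E$, suppose for contradiction that some $M' \in \Xi_e$ decides $L'$ with $time_{M'}(x)<f(|x|)$ on $L'$-instances of length $>k$, for a sub-exponential $f$. Start the simulation $E$ with $PT_1$ in its empty initial state and choose $n>k$ with $f(n)<2^n$. By c1 the computist runs $M'$ on $0^n$. In $f(n)$ steps $M'$ invokes $SBOX_e$ at most $f(n)$ times, so $PT_1$ is probed on fewer than $2^n$ distinct strings and some $y \in \{0,1\}^n$ is never probed. I would then split on $M'$'s output. If $M'$ outputs $\mathit{NO}$, the computist submits $y$ to $M$; because $PT_1$'s current transition table has no complete path labeled $y$, case~3 of example~\ref{autool} applies, the machine extends itself, and $y$ is accepted, placing $y \in L(M)$ and $0^n \in L'$, contradicting soundness (c11). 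If $M'$ outputs $\mathit{YES}$, one must arrange that no length-$n$ string is in $L(M)$ when $M'$ halts, so that $L'(0^n)=\mathit{NO}$ contradicts $M'$.

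The $\mathit{YES}$ branch is the main obstacle. My approach is a pre-training phase: before invoking $M'$ on $0^n$, the computist (using c1 and the fact that $M'$'s procedure text is a fixed finite object) determines by simulation the at most $f(n)$ length-$n$ strings that $M'$ might query, and for each candidate $w$ pre-trains $PT_1$ by submitting $w0$ and $w1$. Each such length-$(n+1)$ submission triggers case~3, installing a path of length $n+1$ whose depth-$n$ intermediate node is non-accepting yet transits to accepting leaves at depth $n+1$. After this pre-training, every length-$n$ query $w$ later issued by $M'$ falls into case~2 first sub-case (readable to a non-accepting state that transits to an accepting state in one step), so $M'$ rejects $w$ without evolution and no length-$n$ string ends up in $L(M)$ during $M'$'s run. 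The computist then refrains from any further length-$n$ probes, yielding $L'(0^n)=\mathit{NO}$.

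The subtle point is that $M'$'s length-$n$ query set depends on $PT_1$'s state, so the pre-training and the simulation of $M'$ must be kept consistent; I would resolve this by an inductive bookkeeping in which the computist interleaves pre-training with a hypothetical simulation of $M'$, updating the candidate query set whenever the simulation uncovers a new length-$n$ query, exactly as in the standard Baker-Gill-Solovay oracle construction. Both branches then defeat $M'$, yielding $L(M') \ne L'$ and completing the theorem.
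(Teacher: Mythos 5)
Your construction of $M$, the choice of $L'$, the use of free will (c1) plus the persistence of $PT_1$, and the key trick of ``locking out'' a string $w$ by first accepting a successor $w0$ or $w1$ (so that later queries to $w$ fall into the first sub-case of case~2 of example~\ref{autool} and never evolve the machine) are all exactly the paper's argument, and your NO branch matches the paper's. The genuine gap is in your YES branch. After your pre-training you only lock out the at most $f(n)$ length-$n$ strings that $M'$ actually queries, and you then conclude $L'(0^n)=\mathit{NO}$ because ``the computist refrains from any further length-$n$ probes.'' Refraining does not make the remaining $2^n - f(n)$ unqueried length-$n$ strings lie outside $L(M)$: under condition c2 a string is in $L(M)$ whenever the computist \emph{can} construct a successful computation path for it, and for any sufficiently long string $y$ whose successors have never been fed to $PT_1$, querying $y$ triggers case~3 and accepts it. So every untouched length-$n$ string is still a realizable witness, $0^n\in L'$ remains attainable, and $M'$'s YES answer is not refuted. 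Your strategy only shows that $M'$'s run did not itself exhibit a witness, which is not a contradiction.

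The paper closes this hole by brute force: after $M'$ answers YES on $w$, the computist runs $M$ on \emph{all} $2^{|w|}$ strings $v0$ with $|v|=|w|$ (free will places no time bound on this), which by case~3 installs accepting states only at depth $|w|+1$ and thereby permanently forces, via case~2, every length-$|w|$ string out of $L(M)$. Only then is $w\notin L'$ definitively established and the contradiction with $M'$'s YES obtained. Your proof needs this exhaustive post-processing step (or an equivalent way of killing all length-$n$ witnesses, not just the queried ones); with it added, your argument coincides with the paper's. A second, smaller point: your interleaved Baker--Gill--Solovay bookkeeping for the adaptive query set is unnecessary in this model, since the computation path of $M'$ is generated by the static $TBOX_e$ alone and does not depend on the state of $SBOX_e$; the paper simply computes the query set $H(w)$ once.
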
\begin{proof}
 Consider the following procedure $M\in \Xi_e$
\begin{itemize} \item[] $\Sigma=\{0,1\},\Gamma=\{0,1,\triangle\},$

\item[]$M=\{[(q_0,\triangle)\rightarrow(h,\triangle,R)],
[(h,0)\rightarrow(h,0,R)],[(h,1)\rightarrow(h,1,R)]\}$.
\end{itemize} The language of the   procedure $M$, $L(M)$, is not predetermined in simulation $E$. That is, as the computist  chooses a string $x\in
\Sigma^*$ to check whether $x$ is an element of   $L(M)$,  the inner structure of the universe-computer, $U_e$ evolves. Depending on the ordering of the strings, says $x_1,x_2,...$, that the computist chooses to check whether $x_i\in L(M)$ the language $L(M)$ eventually is determined.

 It is obvious that the language  $L(M)$ belongs to $\mathrm{P}_{E}$
(due to the definition of time complexity in definition~\ref{CE}).

 Let $L'=\{x\in \Sigma^*\mid\exists y
(|y|=|x|\wedge y\in L(M))\}$. It is again obvious that $L'$
belongs to $\mathrm{NP}_{E}$.

Suppose there  exists  a  procedure $M'\in \Xi_e$ that the
computist can compute $L'$ by $M'$ in time complexity less than a
sub-exponential function $f$. Then for some $k\in \mathbb{N}$, for
all $x$ with length greater than $k$, $x$ belongs to $L'$
whenever\begin{itemize} \item[] the computist constructs  a
successful computation path $C_{0,x}C_{1,x},...,C_{n,x}$ of the
 procedure $M'$ on $x$, for some $n\leq
f(|x|)$.\end{itemize}

 Let $m_1\in \mathbb{N}$ be the maximum length of those
strings $y\in \Sigma^*$ that   \underline{until now} are accepted
by the persistently evolutionary   machine $PT_1$ (see example~\ref{autool}) which is
inside the $SBOX_e$ of $U_e$. Define $m=\max(m_1,k)$.

For every $y\in \Sigma^*$, let
$path(y):=C_{0,y}C_{1,y},...,C_{f(|y|),y}$ be the computational
path of  the procedure $M'$ on the string $y$. The $path(y)$ is  generated by the transition box of $U_e$. 
%Note that $TBOX_e$ is
%a static machine and does not evolve, thus %$path(y)$ is
%independent of the behavior of the computist.
 Let
\begin{center}
$S(y)=\{C_{j,y}\mid C_{j,y}\in path(y) \wedge \exists x\in
\Sigma^*(C_{i,y}=(h,x\underline{\triangle} ))\}$
\end{center}and
\begin{center}
$H(y)=\{x\in \Sigma^*\mid \exists C_{j,y}\in path(y)
(C_{j,y}=(h,x\underline{\triangle} ))\}$
\end{center}
We refer by $|H(y)|$ to the number of elements of $H(y)$, we have
$|H(y)|\leq f(|y|)$ if $|y|>k$. Also let $E(y)=H(y)\cap \{x\in
\Sigma^*\mid |x|=|y|\}$, and $D(y)=H(y)\cap \{x\in \Sigma^*\mid
|x|=|y|+2\}$.

\begin{center}
    
\end{center}

\noindent Let $w\in \Sigma^*$ with $|w|>m$ be arbitrary. Two cases are possible:
either $S(w)=\emptyset$ or $S(w)\neq \emptyset$.

\underline{Consider the first case}. $S(w)=\emptyset$.

The computist wants to check if the string $w$ is in $L(M')$, and  the  UC  of the simulation $E$, that is $U_e$, works on (the
 procedure $M'$ and the string $w$), to compute whether
$w$ is in $L(M')$ or not. Since the set $S(w)$ is empty, the
execution of $M'$ on $w$ does not make the $SBOX_e$ to evolve, and
it remains unchanged.
\begin{itemize}
    \item If the computist in the simulation $E$, computes that $w\in L(M')$ then it means that
    there exists a string $v\in \Sigma^*$ such that
    
    \begin{center}
         $|v|=|w|$ and
        $v\in L(M)$ $(*)$.
    \end{center}  
     
    \begin{itemize}
        \item[(i)] Since the
        computist  has the free will (see c1 of definition~\ref{CE}), he first starts to computes     procedure $M$ on all strings in $\Sigma^*$
        with length $|v|+1$ sequentially. As the length of $v$ is greater
        than $m$, all strings with length $|v|+1$ are accepted by the
        persistently evolutionary Turing machine $PT_1$ (see item-3 of
        example~\ref{autool}) which is inside  $SBOX_e$. 
        
        \item[(ii)]    Then the computist   checks that whether $v$ is $L(M)$. But because of the evolution of $SBOX_e$ happened in part (i), the  UC on the computation of $M$ on $v$ outputs $NO$, and  thus
        $v$ is not an element of    $L(M)$ in the simulation $E$ (see the item-2 of
        example~\ref{autool}). So $v\not\in L(M)$, and it contradicts with $(*)$.
    \end{itemize}

    \item If the computist  in the simulation $E$, computes  that $w\not\in L(M')$ then it means
    that for all strings $v\in \Sigma^*$, $|v|=|w|$, we have $v\not\in
    L(M)$. But it contradicts with  the free will of the computist
    again. As the length $w$ is greater than $m$, the computist  may
    choose a string $z$ with $|z|=|w|$ and by the item-3 of
    example~\ref{autool}, we have $z\in L(M)$, contradiction.
\end{itemize}

\underline{Consider the second case}. $S(w)\neq\emptyset$.

 Suppose that the computist, before computing $M'$ on $w$,   starts to compute the procedure  $M$ on all strings $v0$'s, for all  $v\in E(w)$,
and then   computes procedure
$M$ on all strings $v0$'s, for all $v\in D(w)$ respectively.

 Since
$|w|>m$, the computist   have $u0\in L(M)$ for all $u\in
E(w)\cup D(w)$, and  $SBOX_e$ of
$U_e$ evolves through computing $M$ on $u0$'s.
The   UC   evolves
in the way that   $SBOX_e$ outputs $No$ for all
configuration in 
$$\{C_{i,w}\in S(w)\mid \exists x\in E(w)\cup
D(w)(C_{i,w}=(h,x\underline{\triangle}))\}.$$

 After that, the computist starts to compute $M'$ on $w$. 
 Either the computist
finds $w\in L(M')$ or $w\not\in L(M')$.

\begin{itemize}
    \item  Suppose the first case
    happens and $w\in L(M')$. It contradicts with the free will of the
    computist. The computist computes  $M$
    on all strings $v0$, $|v|=|w|$ sequentially, and would make $\{v0\in
    \Sigma^*\mid |v|=|w|\}\subseteq L(M)$. Then the $SBOX_e$
    evolves in the way that, it will output $No$ for all
    configurations $(h,v\underline{\triangle})$, $|v|=|w|$, and   thus
    there would exist no $v\in L(M)\cap\{x\in \Sigma^*\mid |x|=|w|\}$ which implies $w\not\in L(M')$, contradiction.
    
    \item 
    Suppose the second case happens and $w\not\in L(M')$. Since
    $|H(w)|<f(|w|)< 2^{|w|}$, during the computation of $M'$ on $w$,
    only $f(|w|)$ numbers of configurations  of the form
    $(h,x\underline{\triangle})$, $x\in\{v0\mid |v|=|w|\}\cup \{v1\mid
    |v|=|w|\}$ are given as input to the $SBOX_e$. Therefore
    there exists a string $z\in \{x\in \Sigma^*\mid |x|=|w|\}$ such
    that none of its successors have been input to the persistently
    evolutionary Turing machine $PT_1$, and if the computist chooses
    $z$ and computes $M$ on it, then $z\in L(M)$ which implies $w \in L'$. Contradiction.
\end{itemize}

 We showed that $L'$ cannot be computed by any $M'$ that its time complexity is less that a sub-exponential function. Thus $L'$ does not belong to the class $\mathrm{P}$. But because of the procedure $M$, we have $L'$ belongs to $\mathrm{NP}$ and therefore in the simulation $E$,
 
 \begin{center}
     $\mathrm{P\neq NP}$.
 \end{center}
 
\end{proof}

Note that the proof of Theorem~\ref{subex} cannot be carried by the computist who lives in the environment. The above proof is done by us  (the creator of the simulation $E$). 
The   theorem simply says that if $L'$ belongs to $\mathrm{NP_E}$  then it forces the computist  to
 interacts with the $UC$ of the simulation $E$ in some certain orders, which conflicts with the item c1 of definition~\ref{CE}.
 
% \begin{thebibliography}{10}
 % \end{thebibliography}

\section{Reality  is not distinguishable from a Persistently Evolutionary Simulation Model}

   \noindent We, inhabitants of reality, can never find out whether the reality persistently evolves or not. We can never discover that whether the UC  of the reality is a Turing machine or a Persistently Evolutionary Turing machine.

In previous sections, we introduce two computer simulations $E$ and $V$ which the set of their procedures are the same, i.e., $\Xi_v=\Xi_e$. The UC of $V$ is static and the UC of $E$ persistently evolves, though the computist cannot be aware that whether the UC of $V$ ($E$) is static or persistently evolves.

Procedures in the simulation $E$ (procedures in the set $\Xi_e$) are fixed and do not change, but for some $M\in \Xi_e$, we have $L(M)$ is non-predetermined due to the evolution of the UC of $E$ (see~\ref{subex}). We as the creator of the simulation $E$, we know that the language of the procedure $M$ in the proof of~\ref{subex} is non-predetermined but the computist who lives inside $E$ cannot be sure that $L(M)$ is not predetermined and he/she may think that $L(M)$ is static.

We are also the computist of reality, and we do not have access to the UC of reality. The UC of reality is a black box for us. Actually, the UC of any computer simulation (video game) is a black box for its inhabitants.

The set of the procedures of the reality is the set of all Turing machines, i.e. $\Xi_{reality}$ is the set of all Turing machines. 

The structure of Turing machines is fixed and does not change through computations (similar to the procedures of simulations $V$ and $E$), but we (human beings, inhabitants of the reality) cannot be sure that the language of Turing machines are predetermined.

 At any moment of time, we only interact with a finite number of times with the UC, and based on a finite number of interactions, it is not possible to discover that if the UC is static or persistently evolves.

\begin{definition}  \emph{(BLACK BOX)}. Let $X$ and $Y$ be two
sets,

\begin{itemize}
\item  an \emph{input-output} black box $\mathbf{B}$, for an
observer, is a box that \begin{itemize} \item  The observer
  does not see the inner instruction of the box, and

\item the observer   chooses elements in $X$, and input them to
the box, and receives elements in $Y$ as output.
\end{itemize}\item   We say an input-output black box $\mathbf{B}$ behaves
\underline{well-defined}  whenever if the
observer   inputs $x_0$ to the black box, and the black box
outputs $y_0$ at a stage of time, then whenever in future if the
observer  inputs the same $x_0$ again, the black box outputs
the same $y_0$.

\item We say a well-defined black box is static (not
order-sensitive) whenever for all $n\in\mathbb{ N}$, for every
$x_1,x_2,...,x_n\in X$, for every permutation $\sigma$ on
$\{1,2,...,n\}$, if the  observer   inputs $x_1,x_2, ...,x_n$
respectively to $B$ once, and receives
$y_1=\mathbf{B}x_1,y_2=\mathbf{B}x_2,...,y_n=\mathbf{B}x_n$, and
then  we reset the 
black box. After resetting, if the observer inputs
$x_{\sigma(1)},x_{\sigma(2)},$$...,x_{\sigma(n)}$ respectively to
$\mathbf{B}$, then the outputs of $\mathbf{B}$ for each $x_i$
would be the same already output $y_i$, before reset.

\item If an observer cannot reset a black box $\mathbf{B}$, then he/she can never discover
whether the black box is static or order-sensitive.
\end{itemize}
\end{definition}

\begin{proposition}~
\begin{itemize}
\item For every finite sets of pairs $S=\{(i_k,o_k)\mid 1\leq
k\leq n, n\in \mathbb{N}, i_k,o_k\in \Sigma^*\}$, there exists a
Turing machine $T$ such that for all $(i_k,o_k)\in S$, if we give
$i_k$ as an input to $T$, the Turing machine $T$ outputs $o_k$.

\item For every finite sets of pairs $S=\{(i_k,o_k)\mid 1\leq
k\leq n, n\in \mathbb{N}, i_k,o_k\in \Sigma^*\}$, there exists a
Persistent evolutionary Turing machine $N$ such that for all
$(i_k,o_k)\in S$, if we give $i_k$ as an input to $N$, the
Persistent evolutionary  machine $N$ outputs $o_k$.

\end{itemize}
\end{proposition}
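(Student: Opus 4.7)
The plan is to reduce both items to a single elementary construction: a hard-wired lookup table realised inside a finite transition function. For the first item, let $S=\{(i_1,o_1),\dots,(i_n,o_n)\}$ be given, where we may assume the inputs $i_1,\dots,i_n$ are pairwise distinct (otherwise, if $i_k=i_j$ with $o_k\neq o_j$, no deterministic machine can satisfy both requirements, so the statement already presupposes consistency of the table). Set $L=\max_k(|i_k|+|o_k|)+1$, a finite bound on the amount of tape any single comparison/write phase needs to inspect.

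I would then construct $T$ as follows. From the start state, $T$ reads the input one symbol at a time, walking down a finite decision tree whose internal nodes correspond to the symbols of the $i_k$'s and whose leaves correspond to either ``matched $i_k$ exactly'' or ``no match''. The decision tree has at most $1+\sum_k|i_k|$ nodes, so it is realised by finitely many states. When a leaf of the first type is reached with $i_k$ identified, $T$ enters a small subroutine that erases the input and writes $o_k$ onto the tape, using one state per symbol of $o_k$ (finitely many). A ``no match'' leaf simply halts with, say, the input preserved; this is allowed because the proposition only constrains behaviour on the inputs listed in $S$. Since every state and every transition in this construction is drawn from finite sets, $T$ is a bona fide Turing machine, and by construction $T(i_k)=o_k$ for each $k$.

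For the second item, I would observe that a Persistently Evolutionary Turing machine is a generalisation of a Turing machine in which the evolution operation is allowed to do nothing. Concretely, take the $T$ built above and regard it as a PE-TM whose $\mathrm{Evolve}$ rule is the identity on every input; then the structure of the machine never changes, and the resulting box behaves identically to $T$. Hence the PE-TM $N$ also satisfies $N(i_k)=o_k$ for each $(i_k,o_k)\in S$.

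I do not expect any genuine obstacle: the essential content is the obvious fact that a finite partial function on $\Sigma^*$ is Turing-computable, and the PE case is immediate by treating a static machine as a degenerate evolutionary one. The only subtlety worth flagging in the write-up is the consistency caveat about $S$, and the remark that the proposition says nothing about efficiency, so the naive linear-sized decision tree is entirely adequate.
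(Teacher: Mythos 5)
Your construction for the first bullet is correct and is exactly the content the paper waves away with ``It is straightforward'': a finite consistent table is realised by a decision-tree lookup machine with finitely many states, and the consistency caveat you flag (distinct $i_k$, or at least no conflicting pairs) is a reasonable reading of the hypothesis. Nothing is missing there.

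The second bullet is where I would push back. You reduce it to the observation that a Turing machine is a persistently evolutionary Turing machine whose $\mathrm{Evolve}$ rule is the identity. But the paper's own dichotomy (in the definition of static versus persistently evolutionary UCs) reserves ``persistently evolutionary'' for the case where the inner structure \emph{does} change through interaction; a machine that never evolves is, by that definition, static. More importantly, the proposition is used immediately afterwards in a corollary asserting that an observer of a black box can never tell whether a static machine or a persistently evolutionary one sits inside; if the only persistently evolutionary witness you can produce is a static machine in disguise, the corollary loses its force. The intended witness is a genuinely evolving machine in the style of $PT_1$ from Example~\ref{autool}: start from a trivial structure and let the machine accrete states and transitions as each $i_k$ is fed in, with the evolution rules arranged so that the first presentation of $i_k$ grows a path producing $o_k$ and persistence guarantees the same output on any repetition. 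That machine really does change its structure during the interaction, reproduces the finite table $S$, and is order-sensitive on inputs outside $S$ --- which is precisely what the subsequent indistinguishability argument needs. Your proof of the second item should be repaired along those lines rather than by the degenerate-$\mathrm{Evolve}$ shortcut.
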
\begin{proof}
It is straightforward.
\end{proof}

\begin{corollary}
Let $B$ be an input-output black box for an observer. At each
stage of time, the observer has observed only a finite set of
input-output pairs. By the previous proposition, at each stage of
time, the observer knows both the following cases to be possible:
\begin{itemize}
\item[1-] There exists a Turing machine inside the black box $B$.

\item[2-] There exists a Persistent evolutionary Turing machine
inside the black Box $B$.
\end{itemize}
\end{corollary}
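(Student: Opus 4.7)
The plan is to reduce the corollary directly to the preceding proposition by carefully formalizing what information the observer actually possesses at any given time. The key observation is that the observer's knowledge at any stage is entirely captured by a finite log of input-output interactions, and the proposition tells us such a finite log is consistent with both types of internal machines.

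First I would fix an arbitrary stage of time and denote by $S = \{(i_1,o_1), (i_2,o_2), \dots, (i_n,o_n)\}$ the finite set of input-output pairs the observer has witnessed so far. Since this is precisely the kind of finite data set described in the previous proposition, I would invoke its first clause to obtain a Turing machine $T$ consistent with $S$, and its second clause to obtain a persistently evolutionary Turing machine $N$ also consistent with $S$. The corollary then follows once I argue that consistency with $S$ is all the observer can use to discriminate: since the observer cannot reset the box and has no access to its internal workings (both are assumptions baked into the black-box definition), any two candidate mechanisms producing the same output on every past input are empirically indistinguishable to the observer at this stage.

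The main step, which is conceptual rather than technical, is to make explicit that ``possible'' here means epistemically possible relative to the observer's finite experience. I would state this as: a hypothesis about the contents of $B$ is \emph{not ruled out} for the observer if and only if it agrees with every observed pair in $S$. Since both $T$ and $N$ agree with $S$ by construction, neither hypothesis is ruled out, so both remain possible. This packaging of the argument leans on the last bullet of the black-box definition, which explicitly rules out resetting and hence rules out any experiment that could separate a static mechanism from an order-sensitive one.

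The only mildly subtle point — and the one I would flag as the ``hard part,'' though it is still essentially unpacking definitions — is ensuring that the observer's inability to tell $T$ from $N$ apart truly depends on nothing beyond the finiteness of $S$ and the absence of a reset. In particular, one should note that even if the observer intends to perform more experiments in the future, at \emph{this} stage of time the data $S$ is all that is available, so both cases remain live possibilities now, exactly as the corollary asserts. With that framing in place, no further calculation is needed.
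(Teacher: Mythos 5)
Your proposal is correct and follows essentially the same route as the paper, which treats the corollary as an immediate consequence of the preceding proposition: the observer's data at any stage is a finite set of input-output pairs, and both clauses of the proposition supply a static Turing machine and a persistently evolutionary machine consistent with that finite set, so both hypotheses remain possible. Your additional unpacking of ``possible'' as ``not ruled out by the finite observation log, absent a reset'' is a reasonable elaboration of what the paper leaves implicit, but it is not a different argument.
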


Therefore, We  (which do not have access to the UC of the reality) can never discover that whether the UC of the reality is static or order-sensitive and persistently evolves.

The UC  of the simulation $E$ ($SOBX_e$ and $TBOX_e$) works in linear time. As we do not have access to the inner side of the UC of reality, we cannot say whether the "successful box" of the reality evolves or not. The universe-computer, UC, of the reality is a linear-time oracle for us (as the computist), and a linear-time oracle does not affect complexity classes.  
\begin{center}
We can never distinguish the reality from $E$.
\end{center}
  
 \section{Conclusion}
 \begin{itemize}
     \item[1-]  We introduced a new method of reasoning named $\mathrm{CSMT}$.
     \item[2-] We constructed a computer simulation $E$ which its universe-computer, UC, persistently evolves. 
     \item[3-] We proved, in the computer simulation $E$, $\mathrm{P}$ is not equal to $\mathrm{NP}$. The simulation $E$ is a counter-model for $\mathrm{P=NP}$.
     \item[4-] We (who lives in the reality) does not have access to the UC of the reality, it is a black box for us, and we can never discover that whether the UC of the reality is static or persistently evolves. 
     \item[5-] We cannot prove $\mathrm{P=NP}$, since if we could prove $\mathrm{P=NP}$, then we could discover that the UC of the reality does not persistently evolve, it contradicts with item 4.
 \end{itemize}

%\begin{thebibliography}{10}
%\end{thebibliography}


\begin{thebibliography}{10}
    \bibitem{kn:NBOS} N. Bostrom, {\em Are you living in a computer simulation},
Philosophical Quarterly, Vol. 53, 2003.

\bibitem{kn:modelm} D. Marker, {\bf Model Theory: an introudction},
Springer, 2002. 

 

\end{thebibliography}

\begin{thebibliography}{10}
\bibitem{kn:arora0} S. Arora, B. Barak,
{\bf Computational Complexity, a modern approach}. Cambridge
University Press, 2007.
    
    \bibitem{kn:mom0} A. Meduna, {\bf Automata and Languages: Theory and Applications},
    Springer, 2000. 
\end{thebibliography}
\end{document}